\documentclass[10 pt,conference]{IEEEtran}

\usepackage{amsmath,times,amssymb,epsfig,nicefrac,color}
\usepackage{amsfonts,graphics,subfigure,epsfig,graphics}
\usepackage[mathcal]{euscript} 

\usepackage{mathrsfs}
\usepackage{cite,calc}
\usepackage{euler}
%


\interdisplaylinepenalty=2500 

\newcommand\nc\newcommand
\nc\bfa{{\boldsymbol a}}\nc\bfA{{\boldsymbol A}}\nc\cA{{\mathcal A}}
\nc\bfb{{\boldsymbol b}}\nc\bfB{{\boldsymbol B}}\nc\cB{{\mathcal B}}
\nc\bfc{{\boldsymbol c}}\nc\bfC{{\boldsymbol C}}\nc\cC{{\mathcal C}}
\nc\sC{{\mathscr C}}
\nc\bfd{{\boldsymbol d}}\nc\bfD{{\boldsymbol D}}\nc\cD{{\mathcal D}}
\nc\bfe{{\boldsymbol e}}\nc\bfE{{\boldsymbol E}}\nc\cE{{\mathcal E}}
\nc\bff{{\boldsymbol f}}\nc\bfF{{\boldsymbol F}}\nc\cF{{\mathcal F}}
\nc\bfg{{\boldsymbol g}}\nc\bfG{{\boldsymbol G}}\nc\cG{{\mathcal G}}
\nc\bfh{{\boldsymbol h}}\nc\bfH{{\boldsymbol H}}\nc\cH{{\mathcal H}}
\nc\bfi{{\boldsymbol i}}\nc\bfI{{\boldsymbol I}}\nc\cI{{\mathcal I}}
\nc\bfj{{\boldsymbol j}}\nc\bfJ{{\boldsymbol J}}\nc\cJ{{\mathcal J}}
\nc\bfk{{\boldsymbol k}}\nc\bfK{{\boldsymbol K}}\nc\cK{{\mathcal K}}
\nc\bfl{{\boldsymbol l}}\nc\bfL{{\boldsymbol L}}\nc\cL{{\mathcal L}}
\nc\bfm{{\boldsymbol m}}\nc\bfM{{\boldsymbol M}}\nc\sM{{\mathscr M}}
\nc\bfn{{\boldsymbol n}}\nc\bfN{{\boldsymbol N}}\nc\cN{{\mathcal N}}
\nc\bfo{{\boldsymbol o}}\nc\bfO{{\boldsymbol O}}\nc\cO{{\mathcal O}}
\nc\bfp{{\boldsymbol p}}\nc\bfP{{\boldsymbol P}}\nc\cP{{\mathcal P}}
\nc\bfq{{\boldsymbol q}}\nc\bfQ{{\boldsymbol Q}}\nc\cQ{{\mathcal Q}}
\nc\bfr{{\boldsymbol r}}\nc\bfR{{\boldsymbol R}}\nc\cR{{\mathcal R}}
\nc\bfs{{\boldsymbol s}}\nc\bfS{{\boldsymbol S}}\nc\cS{{\mathcal S}}
\nc\bft{{\boldsymbol t}}\nc\bfT{{\boldsymbol T}}\nc\cT{{\mathcal T}}
\nc\bfu{{\boldsymbol u}}\nc\bfU{{\boldsymbol U}}\nc\cU{{\mathcal U}}
\nc\bfv{{\boldsymbol v}}\nc\bfV{{\boldsymbol V}}\nc\cV{{\mathcal V}}
\nc\bfw{{\boldsymbol w}}\nc\bfW{{\boldsymbol W}}\nc\cW{{\mathcal W}}
\nc\bfx{{\boldsymbol x}}\nc\bfX{{\boldsymbol X}}\nc\cX{{\mathcal X}}
\nc\bfy{{\boldsymbol y}}\nc\bfY{{\boldsymbol Y}}\nc\cY{{\mathcal Y}}
\nc\bfz{{\boldsymbol z}}\nc\bfZ{{\boldsymbol Z}}\nc\cZ{{\mathcal Z}}

\DeclareMathOperator{\rank}{rank}

\newcommand{\rdss}{\mathrm{RDSS}}
\newcommand{\indx}{\mathrm{INDEX}}
\newcommand{\len}{\mathrm{len}}
\newcommand{\minrank}{\mathrm{minrank}}

\newtheorem{theorem}{Theorem}
\newtheorem{definition}{Definition}
\newtheorem{lemma}[theorem]{Lemma}
\newtheorem{proposition}[theorem]{Proposition}
\newtheorem{corollary}[theorem]{Corollary}

\newtheorem{remark}{\indent Remark}


\newcommand\ff{{\mathbb F}}

\begin{document}
\title{On a Duality Between Recoverable Distributed Storage and Index Coding}
\author{\IEEEauthorblockN{Arya Mazumdar}
\IEEEauthorblockA{Department of ECE\\
University of Minnesota--
Twin Cities \\Minneapolis, MN  55455\\ email: \texttt{arya@umn.edu}}
}

\maketitle

\allowdisplaybreaks

\begin{abstract}
In this paper, we introduce a model of a single-failure locally recoverable
distributed storage system. This model appears to give rise to a problem seemingly
dual of the well-studied index coding problem. The relation between 
the dimensions of an optimal index code and optimal distributed storage code
of our model has been  established in this paper. We also show some extensions
to vector codes.
\end{abstract}

\section{Introduction}

Recently,  local repair property of error-correcting codes is the center of a lot of research activity. 
In a distributed storage system, a single server failure is the most common error-event, and in that 
case, the aim is to reconstruct the content of the server from as few other servers as possible (or by downloading 
minimal amount of data from other servers). The study of such {\em regenerative}  storage systems was
initiated in \cite{dimakis2010network} and then followed up in several recent works. In \cite{gopalan2012locality},
a particularly neat characterization of a local repair property is provided. It is assumed that,  each symbol of an encoded 
message is stored at a different
node in the network (since the symbol alphabet is
  unconstrained, a symbol could represent a packet or block of bits of
  arbitrary size). 
  Accordingly,
\cite{gopalan2012locality} investigates codes allowing any
single symbol of any codeword to be recovered from at most a constant
number of other symbols of the codeword, i.e., from a number of
symbols that does not grow with the length of the code.  

The work of \cite{gopalan2012locality} is then further generalized to several directions
and a number of impossibility results regarding, as well as construction of, {\em locally
repairable codes} were presented 
(see, for example, \cite{papailiopoulos2012locally,tamo2013optimal,cadambe2013upper,silberstein2013optimal,kamath2012codes}),
culminating in very recent construction of \cite{barg2013family}.

However, the topology of the network of
distributed storage system is missing from the above definition of local repairability.
Namely, all servers are treated equally irrespective of their physical positions, proximities, 
and connections.
Here we take a step to include that into consideration.
 We study the case when the topology of the storage system is fixed and the network of storage
 is given by a graph. In our model, the servers are represented by the vertices of a graph, and
 two servers are connected by an edge if it is easier to establish up-or-down link between them, for reasons
 such as physical locations  of the servers, architecture of the distributed system or 
 homogeneity of softwares, etc. It turns out that,
our model is closely related to the following {\em index coding} problem on a side information graph.
In this paper, we formalize this relation.

\subsection{Index Coding}
A very natural ``source coding'' problem on a network, called the {\em index coding}, was introduced in
\cite{bar2006index}, and since then is a subject of extensive research.  In the index coding problem a {\em side information} graph
$G(V,E)$ is given. Each vertex $v \in V$ represents a receiver that is interested in knowing a uniform random variable $Y_v \in \ff_q$.
For any $v\in V$, define $N(v)= \{u\in V: (v,u) \in E\}$ to be 
the neighborhood of $v$.  The receiver at $v$ knows the values of the  variables $Y_u, u \in N(v)$.
How much information should a broadcaster  transmit, such that every receiver knows
the value of its desired random variable? Let us give the formal definition from \cite{bar2006index},
adapted for $q$-ary alphabet here.
\begin{definition}
An  {\em index
code} $\cC$ for $\ff_q^n$ with side information graph $G(V,E), V = \{1,2,\dots,n\},$ is a set of
codewords in $\ff_q^{\ell}$ together with:
\begin{enumerate}
\item An encoding function $f$ mapping inputs in $\ff_q^n$
to codewords, and
\item A set of deterministic decoding functions $g_1,\dots,g_n$ such
that $g_i\Big(f(Y_1,\dots,Y_n), \{Y_j: j \in N(i)\}\Big) = Y_i$ for every $i=1, \dots,n$.
\end{enumerate}
 The encoding and decoding functions 
 depend on G. The integer $\ell$ is called  the length of $\cC$, or $\len(\cC)$. 
Given a graph $G$ the minimum possible length of an index code is denoted by $\indx_q(G)$.
\end{definition} 

In \cite{bar2006index}, a connection has been made with the length of an index code to a quantity called
the minrank of the graph. Suppose, $A = (a_{ij})$ be an $n \times n$ matrix over $\ff_q$. It is said that $A$ {\em fits}
$G(V,E)$ over $\ff_q$ if $a_{ii} \ne 0$ for all $i$ and $a_{ij} =0$ whenever $(i,j) \notin E$ and $i \ne j$. 
\begin{definition}
The minrank of a graph $G(V,E)$ over $\ff_q$ is defined to be,
\begin{equation}
\minrank_q(G) = \min\{\rank_{\ff_q}(A): A \text{ fits } G\}.
\end{equation}
\end{definition}   
It was shown in \cite{bar2006index}, that,
\begin{equation}
\indx_q(G) \le \minrank_q(G),
\label{eq:bary}
\end{equation} 
and indeed, $\minrank_q(G)$ is the minimum length of an index code on $G$ when the encoding function, and the decoding functions
are all {\em linear}. The above inequality can be strict in many cases \cite{alon2008broadcasting,lubetzky2009nonlinear}.

In \cite{alon2008broadcasting}, the problem of index coding is further generalized. We only describe here what is 
important for our context. Just for this part, assume $q=2$. 
To characterize the optimal size of an index code, \cite{alon2008broadcasting} introduces
the notion of a {\em confusion graph}. Two input strings, $\bfx= (x_1, \dots, x_n), \bfy = (y_1, \dots, y_n) \in \ff_2^n$
 are called {\em confusable} if there exists some $i \in \{1, \dots, n\}$, such that $x_i \ne y_i$, but 
 $x_j = y_j,$ for all $j \in N(i)$. In the confusion graph of $G$, total number of vertices are $2^n$, and each vertex
 represents a different $\{0,1\}$-string of length $n$. There exists an edge between two vertices
 if and only if the corresponding two strings are confusable with respect to the graph $G$.
 The maximum size of an {\em independent set} of the confusion graph is denoted by $\gamma(G)$.
 
 However, the confusion graph and $\gamma(G)$ in \cite{alon2008broadcasting} were used
 as  tools to characterize the the {\em rate} of index coding; they were not used to model any immediate
 practical problem. In this paper, we show that, this notion of {\em confusable} strings fits perfectly
 to the situation of {\em local recovery} of a distributed storage system. Namely, $\gamma(G)$, in our problem
 becomes the largest possible size of a locally recoverable code for a system with topology given by $G$.

\subsection{Organization}

The paper is organized in the following way. In Section \ref{sec:rdss}, we introduce formally
the model of a recoverable distributed storage system. 
The notion
of an optimal recoverable distributed storage code given a graph and its relation to the optimal 
index code is also described here. In Section \ref{sec:dual}, we provide an algorithmic
proof of the main duality relation of the index code and distributed storage code. Our proof is
based on a covering argument of the Hamming space, and rely on the fact that for any given subset of the Hamming space
there exists a translation of the set, 
  that has very small overlap with the original subset. We conclude with an extension of the duality theorem 
  to vector codes and a remark on the optimal linearly recoverable distributed storage codes\footnote{After the first version of this paper appeared in arxiv,
  we were made aware of a parallel independent work \cite{shanmugam2014} where 
for {\em  vector linear codes} the duality between RDSS and index codes (see the discussion preceding  Eq.~\eqref{eq:minrk}) is proved. The authors of  \cite{shanmugam2014} use that observation
 to  give an upper bound on the optimal
linear sum rate of the multiple unicast network coding problem. In this paper we have a different focus: we show a proof of (approximate) duality for general (nonlinear) codes.}. 

\section{Recoverable distributed storage systems}\label{sec:rdss}
Consider the network of distributed storage, for example, one  of Fig.~\ref{fig:example}. As mentioned in the introduction,
the property of
two servers  connected by an edge is based on the ease of establishing a link between the servers.
It is also possible (and sensible, perhaps) to model this as a directed graph (especially when uplink and downlink 
constructions have varying difficulties). In the following, we assume that the graph is directed, and an undirected graph is just a 
special case.


If the data of any one server is lost, we want to recover it from the {\em nearby} servers, i.e., the
ones with which it is easy to establish a link. This notion is formalized below.

Suppose, the directed graph $G(V,E)$ represents the network of storage. Each element of $V$
represents a server, and in the case of a server
failure (say, $v \in V$ is the failed server) one must be able to reconstruct its content from its
neighborhood $N(v)$. 

Given, this constraint  what is the maximum amount of information one can store in the system?
Without loss of generality, assume $V= \{1,2,\dots,n\}$ and the  variables
 $X_1, X_2,\dots,X_n$ respectively denote the content of the vertices, where, $X_i \in \ff_q, i =1,\dots,n.$
 \begin{definition}
A  {\em recoverable distributed storage system (RDSS)
code} $\cC \subseteq \ff_q^n$ with storage recovery graph $G(V,E), V = \{1,2,\dots,n\},$ is a set of
vectors in $\ff_q^{n}$ together with:
\begin{enumerate}
\item[-] A set of deterministic recovery  functions, $f_i:\ff_q^{|N(i)|}\to \ff_q$ for $i = 1,\dots,n$ such that
for any codeword $(X_1, X_2,\dots,X_n) \in \ff_q^n,$
\begin{equation}
X_i = f_i(\{X_j: j \in N(i)\}), \quad i = 1,\dots,n.
\end{equation}
\end{enumerate}
 Again, the  decoding functions
 depend on G. The log-size of the code, $\log_q |\cC|$, is called  the dimension  of $\cC$, or $\dim(\cC)$. 
Given a graph $G$ the maximum possible dimension of an RDSS code is denoted by $\rdss_q(G)$.
\end{definition}

%

\begin{figure}[t]
\begin{center}
\includegraphics[width=0.5\textwidth]{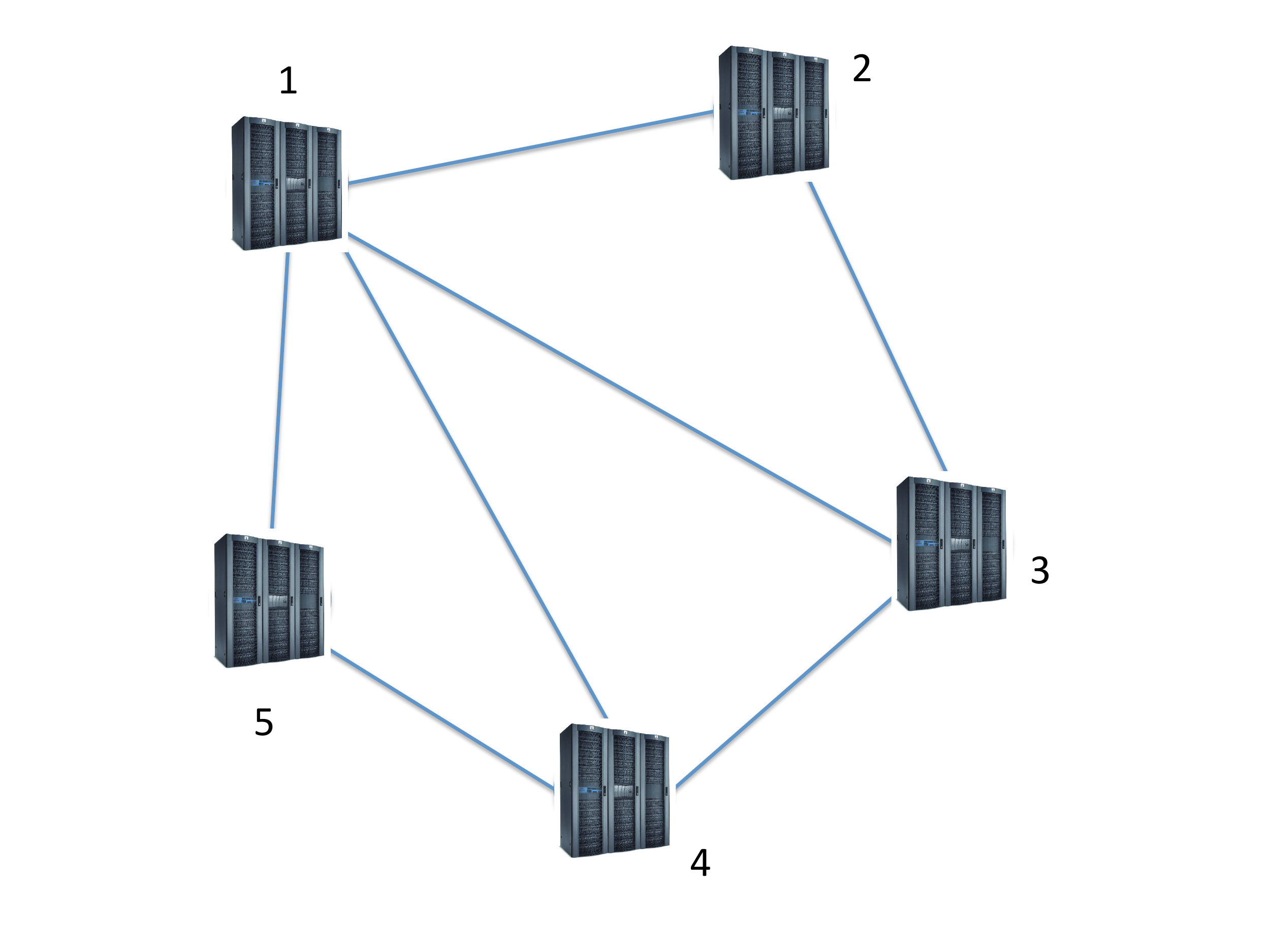}
\end{center}
\caption{Example of a distributed storage graph} 
\label{fig:example}
\end{figure}

For example, consider the graph of Fig.~\ref{fig:example} again. Here, $V = \{1,2,3,4,5\}$. The recovery sets
of each vertex (or storage nodes) are given by:
\begin{align*}
N(1) = \{2,3,4,5\}, \, N(2)  = \{1,3\}, N(3)  = \{1,2,4\}, \\
 N(4)  = \{1,3,5\},  N(5)  = \{1,4\}.
\end{align*}
Suppose, the contents of the nodes $1,2,\dots,5$ are $X_1, X_2,\dots,X_5$ respectively, where, $X_i \in \ff_q, i =1,\dots,5.$
Moreover, $X_1 = f_1(X_2,X_3,X_4,X_5), X_2 = f_2(X_1,X_3), X_3 = f_3(X_1,X_2,X_4), X_4 = f_4(X_1,X_3, X_5), X_5 = f_5(X_1, X_4).$

Assume, the functions $f_i, i =1, \dots, 5$, in this example are linear. That is, for $\alpha_{ij} \in \ff_q, 1\le i,j\le 5,$
\begin{align*}
X_1 &= \alpha_{12} X_2 +\alpha_{13} X_3 + \alpha_{14}X_4 + \alpha_{15}X_5\\
X_2 &= \alpha_{21}X_1+\alpha_{23}X_3\\
X_3 &= \alpha_{31}X_1 +\alpha_{32}X_2+\alpha_{34}X_4\\
X_4 &= \alpha_{41}X_1 + \alpha_{43}X_3 + \alpha_{45}X_5\\
X_5 &= \alpha_{51}X_1+\alpha_{54}X_4.
\end{align*}
This implies, $(X_1,X_2,\dots, X_5)$ must belong to the null-space (over $\ff_q$) of 
\[
D \equiv
 \left( \begin{array}{ccccc}
1 & -\alpha_{12} & -\alpha_{13}  & - \alpha_{14} & -\alpha_{15} \\
-\alpha_{21} & 1 &  -\alpha_{23} & 0 & 0 \\
-\alpha_{31} & -\alpha_{32} & 1 & -\alpha_{34} & 0\\
-\alpha_{41} & 0 &-\alpha_{43}  & 1 & -\alpha_{45}\\
-\alpha_{51} &0 & 0 & -\alpha_{54} & 1
\end{array} \right).\] 
The dimension of the null-space of $D$ is $n$ minus the rank of $D$. Hence, it is evident
that the dimension of the RDSS code is $n -\minrank_q(G)$. 
Also, the null-space of a linear index code for $G$ is a linear RDSS
code for the same graph $G$ (see, Eq.~\eqref{eq:bary}).
From the above discussion, we have, 
\begin{equation}
\rdss_q(G) \ge n - \minrank_q(G),
\label{eq:minrk}
\end{equation}
and, $n -\minrank_q(G)$ is the maximum possible dimension of an RDSS code when
the recovery functions are all linear. 
At this point, it is tempting to make the assertion
$\rdss_q(G) = n - \indx_q(G)$, however, that would be wrong. This is shown 
in the following example.

This example is present in \cite{alon2008broadcasting}, and the distributed storage graph, a {\em pentagon}, is 
shown in Fig.~\ref{fig:pent}. For this graph, a maximum-sized binary RDSS code consists of the codewords
$\{00000,01100,00011,11011,11101\}$. The recovery functions are given by,
\begin{align*}
X_1 = X_2 \wedge X_5, 
X_2 = X_1 \vee X_3,
X_3 = X_2 \wedge \bar{X}_4, \\
X_4 = \bar{X}_3 \wedge X_5
X_5 = X_1\vee X_4.
\end{align*}
If all the recovery functions are linear, we could not have an RDSS code with 
so many codewords. Here $\rdss_2(G) = \log_2 5$. On the other hand, the minimum length of
an index code for this graph is $3$, i.e., $\indx_2(G)=3$, and this is achieved
by the following linear mappings. The broadcaster transmit $Y_1= X_2+X_3, Y_2= X_4+X_5$ and $Y_3= X_1+X_2+X_3+X_4+X_5.$
The decoding functions are, $X_1 = Y_1 + Y_2+Y_3; X_2 = Y_1+X_3; X_3 = Y_1+X_2; X_4 = Y_2+X_5; X_5 = Y_2+X_4.$

Although in general $\rdss_q(G) \ne n - \indx_q(G)$, these two quantities are
not too far from each other. In particular, for large enough alphabet, the left and right hand
sides can be arbitrarily close. This is reflected in Thm.~\ref{thm:main} below.

It is to be noted that, we refrain from using ceiling and floor functions for clarity in this paper.
In many cases, it is clear that the number of interest is not an integer and should be rounded
off to the nearest larger or smaller integer. The main results do not change 
for this.  

\subsection{Implication of the results of \cite{alon2008broadcasting}}
The result of \cite{alon2008broadcasting} can be cast in our context in the 
following way.
\begin{theorem}\label{thm:main}
Given a graph $G(V,E)$, we must have,
\begin{align}
n - &\rdss_q(G) \le \indx_q(G) \le n -\rdss_q(G)\nonumber \\
&  + \log_q\Big(\min\{n\ln q, 1+ \rdss_q(G)\ln q \}\Big). 
\end{align}
\end{theorem}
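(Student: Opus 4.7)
My plan is to route both quantities through the confusion graph $H$ on vertex set $\ff_q^n$: declare $\bfx \sim \bfy$ in $H$ iff $\bfx, \bfy$ are confusable with respect to $G$. The RDSS condition that local recovery functions $\{f_i\}$ exist is exactly the condition that $\cC$ contain no two confusable vectors (otherwise some $f_i$ would have to return two different values on the same input), so $q^{\rdss_q(G)} = \alpha(H)$, the independence number. Dually, an index code is, up to the integer rounding the paper suppresses, a proper coloring of $H$: the encoder $f$ must be injective on every confusable pair, while any $\chi(H)$-coloring yields a valid encoder of length $\log_q \chi(H)$. The theorem therefore reduces to sandwiching $\chi(H)$ between $q^n/\alpha(H)$ and $(q^n/\alpha(H))\cdot\min\{n\ln q,\,1+\ln\alpha(H)\}$.

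The lower bound is immediate: every color class of $H$ is an independent set of size at most $\alpha(H)$, so $\chi(H)\ge q^n/\alpha(H)$, which gives $\indx_q(G)\ge n-\rdss_q(G)$. For the upper bound I would exploit that $H$ is a Cayley graph on $(\ff_q^n,+)$: confusability of $\bfx,\bfy$ depends only on the difference $\bfx-\bfy$ (its support must contain some $i$ and miss $N(i)$), so translation by any $\bfv$ is an automorphism and every translate $I+\bfv$ of a maximum independent set $I$ is again a maximum independent set. A one-line double count then shows that each point of $\ff_q^n$ belongs to exactly $\gamma:=|I|=\alpha(H)$ translates. I would then greedily cover $\ff_q^n$ by translates: if $U$ is the current uncovered set, averaging shows some translate hits at least $|U|\gamma/q^n$ of its points, so $|U|\le q^n(1-\gamma/q^n)^k \le q^n e^{-k\gamma/q^n}$ after $k$ iterations.

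Two stopping rules produce the two arms of the minimum. Running for $k=(q^n/\gamma)\,n\ln q$ steps drives $|U|$ below $1$, yielding $\chi(H)\le (q^n/\gamma)\,n\ln q$. Alternatively, running for $k=(q^n/\gamma)\ln\gamma$ steps leaves $|U|\le q^n/\gamma$ uncovered points, each of which is then mopped up with its own translate, for a total of $(q^n/\gamma)(1+\ln\gamma)$. Combining and applying $\log_q$, using $\ln\gamma=\rdss_q(G)\ln q$, delivers the advertised right-hand side. The main obstacle is the covering step; what makes it work is the Cayley structure of $H$, because it turns the greedy-cover fraction $\gamma/q^n$ into a uniform quantity across all iterations and guarantees that translates of the single optimal $I$ already span a full cover — otherwise one would be stuck with an arbitrary vertex-transitive-like estimate rather than an argument intrinsic to $\ff_q^n$.
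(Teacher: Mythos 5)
Your proof is correct and, at its mathematical core, coincides with the paper's: both bounds come from (i) noting that the fibers of an index-code encoder are RDSS codes (equivalently, independent sets of the confusion graph), and (ii) covering $\ff_q^n$ by translates of a maximum RDSS code, using that translation preserves the RDSS property. Where you differ is mostly presentation and the covering mechanism. You explicitly route through the confusion graph $H$, identifying $q^{\rdss_q(G)}=\alpha(H)$ and $\indx_q(G)=\log_q\chi(H)$ and then bounding $\chi$ in terms of $\alpha$ via vertex transitivity; the paper deliberately avoids introducing $H$ (it says so after stating the theorem), instead phrasing Lemma~\ref{lem:equiv} as pigeonhole on encoder fibers and Lemma~\ref{lem:trans} as ``translates of RDSS codes are RDSS codes,'' which are the same facts in coding language. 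For the covering step you use the standard greedy set-cover averaging: each point lies in exactly $\gamma$ translates (a double count, which is the paper's Lemma~\ref{lem:elias}), so some translate covers a $\gamma/q^n$ fraction of what remains, giving $|U|\le q^n e^{-k\gamma/q^n}$. The paper achieves the identical numerical bound two ways: by choosing translates uniformly at random (its first proof of Lemma~\ref{lem:rdss}), and by a greedy doubling argument via Lemma~\ref{lem:q} that squares the uncovered fraction each round and has the extra feature that the chosen translates form a linear subspace (exploited in the corollary about decoding). Your one-translate-at-a-time greedy is a third, equally valid, variant; it does not by itself yield the linear-subspace structure the paper highlights, but for the stated theorem that structure is not needed. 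Both stopping rules you give and the translation $\ln\gamma=\rdss_q(G)\ln q$ are exactly right, so the right-hand side matches.
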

This result is purely graph-theoretic, the way it was presented
in \cite{alon2008broadcasting}. In particular, the size of maximum independent set
of the confusion graph, $\gamma(G)$ 
was identified as the size of the RDSS code, and its relation to 
the {\em chromatic number} of the confusion graph, which represents the size of the
index code was found.
 Namely the proof was dependent on the following two crucial
steps.
\begin{enumerate}
\item The {\em chromatic number} of the graph can only be so much away from
the {\em fractional chromatic number} (see, \cite{alon2008broadcasting} for detailed definition).
\item The confusion graph is {\em vertex transitive}. This implies that the maximum size of an 
independent set is equal to the number of vertices divided by the fractional chromatic number. 
\end{enumerate}
A proof of the first fact above can be found in \cite{lovasz1975ratio}.
In what follows, we give a simple {\em coding theoretic} proof of this main theorem, without using 
the notion of the confusion graph or its vertex transitivity, for completeness.

\section{The proof of the duality}\label{sec:dual}
We prove Theorem \ref{thm:main} with the help of following two lemmas. 
The first of them is immediate.
\begin{lemma}\label{lem:equiv}
If there exists an index code $\cC$ of length $\ell$ for a side information graph $G$ on $n$ vertices, then there exists an RDSS code
of dimension $n-\ell$ for the distributed storage graph $G$.
\end{lemma}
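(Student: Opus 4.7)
The plan is to use a pigeonhole argument on the fibers of the index code's encoding map. Given an index code $\cC$ of length $\ell$ for $G$, with encoder $f:\ff_q^n\to\ff_q^\ell$ and decoders $g_1,\dots,g_n$ satisfying $g_i(f(Y_1,\dots,Y_n),\{Y_j:j\in N(i)\})=Y_i$, the image of $f$ has size at most $q^\ell$. So by pigeonhole there exists some broadcast symbol $\bfc\in\ff_q^\ell$ whose preimage $S:=f^{-1}(\bfc)\subseteq\ff_q^n$ satisfies $|S|\ge q^{n-\ell}$.

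Next, I would show that $S$ itself is an RDSS code for $G$. For each $i$, define the recovery function $f_i:\ff_q^{|N(i)|}\to\ff_q$ by
\begin{equation*}
f_i\bigl(\{X_j:j\in N(i)\}\bigr) \;:=\; g_i\bigl(\bfc,\{X_j:j\in N(i)\}\bigr),
\end{equation*}
that is, the index-code decoder with the first argument \emph{hard-coded} to the fixed broadcast $\bfc$. Crucially this is a function of only the neighbors' symbols, as the RDSS definition requires. For any $(X_1,\dots,X_n)\in S$, by definition $f(X_1,\dots,X_n)=\bfc$, and therefore
\begin{equation*}
f_i\bigl(\{X_j:j\in N(i)\}\bigr)=g_i\bigl(f(X_1,\dots,X_n),\{X_j:j\in N(i)\}\bigr)=X_i,
\end{equation*}
by the correctness of the index code decoder. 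Hence $S$ (or any subset of $S$ of size $q^{n-\ell}$, noting that any subset of an RDSS code is again an RDSS code since the recovery functions continue to work pointwise) is an RDSS code of dimension at least $n-\ell$.

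There really is no hard step here: the only thing to notice is that fixing the broadcast turns each index-code decoder, which a priori depends on both the broadcast and the side information, into an admissible recovery function that depends only on the neighborhood. The pigeonhole step then immediately delivers a fiber of the required size. I would present the argument in essentially the two short paragraphs above, and note in passing that the converse direction (the actual content of Thm.~\ref{thm:main}) is where the covering/translation machinery mentioned in the introduction becomes necessary.
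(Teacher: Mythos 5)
Your proof is correct and is essentially the same argument the paper gives: pigeonhole on the fibers of the index encoder to find a broadcast value $\bfc$ with a large preimage, then hard-code $\bfc$ into each decoder $g_i$ to obtain the RDSS recovery functions. The only cosmetic difference is that you explicitly remark that one may pass to a subset of the fiber of size exactly $q^{n-\ell}$; the paper leaves this implicit.
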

\begin{proof}
Suppose, the encoding and decoding functions of the index code $\cC$ are $f: \ff_q^n \to \ff_q^{\ell}$ and $g_i:\ff_q^{\ell+N(i)}\to \ff_q, i =1, \dots,n$.
There must exists some $\bfx \in \ff_q^\ell$ such that $|\{\bfy \in \ff_q^n: f(\bfy) = \bfx\}| \ge q^{n-\ell}$.
Let, $\cD_\bfx \equiv \{\bfy \in \ff_q^n: f(\bfy) = \bfx\}$ be the distributed storage code with
recovery functions,
$$
f_i(\{X_j, j \in N(i)\}) \equiv g_i(\bfx, \{X_j, j \in N(i)\}).
$$ 
\end{proof}
The second lemma is the more interesting one.
\begin{lemma}\label{lem:rdss}
If there exists an RDSS code $\cC$ of dimension $k$ for a  distributed storage graph $G$ on $n$ vertices, then there exists an index code
of length $n-k +\log_q \min\{n\ln q, 1+ k\ln q \}$ for the side information graph $G$.
\end{lemma}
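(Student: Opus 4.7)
The plan is to convert the RDSS code $\cC$ into a covering of $\ff_q^n$ by translates of $\cC$, and then turn the covering into an index code. The key structural observation is that confusability is translation-invariant: if $\bfx, \bfy$ agree on $N(v)$ and differ at coordinate $v$, then so do $\bfx + \bfz$ and $\bfy + \bfz$. Moreover, an RDSS code never contains a confusable pair, since otherwise the recovery function $f_v$ would have to produce two different outputs on the same input. Consequently every translate $\cC + \bfz$ is again an RDSS code, with recovery functions obtained by subtracting $\bfz$, applying the original $f_v$, and adding $\bfz$ back.

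Suppose $\bfz_1, \ldots, \bfz_m \in \ff_q^n$ satisfy $\ff_q^n = \bigcup_{i=1}^m (\cC + \bfz_i)$. I build an index code of length $\log_q m$ as follows: the encoder sends an index $i \in \{1, \ldots, m\}$ such that $\bfy - \bfz_i \in \cC$. Receiver $v$ knows $i$ and $\{Y_u : u \in N(v)\}$, forms $C_u = Y_u - (\bfz_i)_u$ for each $u \in N(v)$, applies the RDSS recovery function to obtain $C_v = f_v(\{C_u : u \in N(v)\})$, and outputs $Y_v = C_v + (\bfz_i)_v$. Correctness is immediate from $\bfc = \bfy - \bfz_i \in \cC$.

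It remains to exhibit a covering with $m \le q^{n-k} \min\{n \ln q,\, 1 + k \ln q\}$. For the first term I pick $\bfz_1, \ldots, \bfz_m$ independently and uniformly at random and apply the union bound: $\Pr[\text{some }\bfy\text{ uncovered}] \le q^n(1 - q^{k-n})^m \le q^n e^{-m q^{k-n}}$, which drops below $1$ once $m > n q^{n-k} \ln q$. For the second term I use a greedy cover argument. The double count $\sum_{\bfz \in \ff_q^n} |(\cC + \bfz) \cap R| = q^k |R|$ shows that some translate covers at least $|R|/q^{n-k}$ elements of any uncovered set $R$. Running greedy in two phases --- a multiplicative phase that drives $|R|$ from $q^n$ down to $q^{n-k}$ in $q^{n-k} \cdot k \ln q$ steps, followed by an additive phase that picks up at least one new element per step (at most $q^{n-k}$ further steps) --- produces a cover of size at most $q^{n-k}(1 + k \ln q)$.

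The main technical step is the greedy argument; splitting the analysis at the threshold $|R| \approx q^{n-k}$ is what yields the $1 + k \ln q$ factor instead of the weaker $n \ln q$ of the plain union bound. Everything else is essentially a direct packaging of the translation-invariance of confusability together with the observation that an RDSS code is an independent set in the confusion graph.
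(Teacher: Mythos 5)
Your proof is correct and follows the same strategy as the paper: cover $\ff_q^n$ by translates of $\cC$ (each an RDSS code by translation invariance, established exactly as in the paper's Lemma~\ref{lem:trans}), broadcast the index of the covering translate, and let each receiver apply the shifted recovery function; the random union bound gives the $n\ln q$ term exactly as in the paper. The only minor variation is in the $1+k\ln q$ term, where the paper takes $q^{n-k}k\ln q$ random translates, bounds the expected number of uncovered points by $q^{n-k}$, and mops them up one per translate, while you run a deterministic two-phase greedy via the double-count $\sum_{\bfz}|(\cC+\bfz)\cap R| = q^k|R|$ (the paper's Bassalygo--Elias lemma) --- both arguments split at the $q^{n-k}$ threshold and produce the same bound.
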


To prove this result, we need the help of a number of other lemmas. 
First of all notice that, translation of any RDSS code is an RDSS code.
\begin{lemma}\label{lem:trans}
Suppose, $\cC\subseteq \ff_q^n$ is an RDSS code. Then any known translation of $\cC$ is also an 
RDSS code of same dimension. That is, for any $\bfa \in \ff_q^n$, $\cC+\bfa \equiv \{\bfy+\bfa: \bfy \in \cC\}$
is an RDSS code of dimension $\log_q |\cC|$.
\end{lemma}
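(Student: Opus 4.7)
The plan is direct: to show $\cC + \bfa$ is an RDSS code, I need to exhibit explicit recovery functions for every vertex $i$, built from the recovery functions $f_i$ of the original code $\cC$, and then argue that cardinality (hence dimension) is invariant under translation.

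First I would write $\bfa = (a_1, \dots, a_n)$ and let $\bfz = \bfy + \bfa$ for $\bfy \in \cC$, so $z_i = y_i + a_i$ and $z_j = y_j + a_j$ for all $j \in N(i)$. The defining relation $y_i = f_i(\{y_j : j \in N(i)\})$ can then be rewritten as
\begin{equation*}
z_i - a_i = f_i\bigl(\{z_j - a_j : j \in N(i)\}\bigr),
\end{equation*}
which rearranges to $z_i = f_i(\{z_j - a_j : j \in N(i)\}) + a_i$. This suggests defining, for each $i$, the new recovery function
\begin{equation*}
\tilde{f}_i\bigl(\{z_j : j \in N(i)\}\bigr) \;\equiv\; f_i\bigl(\{z_j - a_j : j \in N(i)\}\bigr) + a_i.
\end{equation*}
Since $\bfa$ is known (fixed and independent of the codeword), $\tilde{f}_i$ is a legitimate deterministic function of $\{z_j : j \in N(i)\}$ only, and the computation above shows $z_i = \tilde{f}_i(\{z_j : j \in N(i)\})$ for every $\bfz \in \cC + \bfa$.

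Finally, the map $\bfy \mapsto \bfy + \bfa$ is a bijection of $\ff_q^n$, so $|\cC + \bfa| = |\cC|$ and therefore $\dim(\cC + \bfa) = \log_q |\cC + \bfa| = \log_q |\cC| = \dim(\cC)$. There is no genuine obstacle here — the only thing to be careful about is that $\bfa$ is a fixed global shift (not something that depends on the codeword), which is exactly what makes $\tilde{f}_i$ well-defined as a function of the neighborhood values alone.
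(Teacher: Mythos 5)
Your proof is correct and takes essentially the same route as the paper: both define the new recovery function by pre-shifting the neighborhood values by $-a_j$, applying the original $f_i$, and post-shifting by $+a_i$. You are slightly more explicit than the paper in writing out the composed function $\tilde{f}_i$ and in spelling out the bijection argument for the dimension, but the underlying idea is identical.
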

\begin{proof}
Let, $(X_1, \dots, X_n) \in \cC.$ Also assume, $\bfa = (a_1, \dots, a_n)$, and $X'_i = X_i +a_i$.
 We know that, there
exist recovery functions such that,
$X_i = f_i(\{X_j: j \in N(i)\}).$

Now, $X'_i = X_i+a_i = f_i(\{X_j: j \in N(i)\}) + a_i \equiv f'_i(\{X'_j: j \in N(i)\}$.
\end{proof}

In particular,  Lemma \ref{lem:rdss}
crucially use the existence of a {\em covering} of the entire $\ff_q^n$, by translations of
an RDSS code. 
\begin{proof}[Proof of Lemma \ref{lem:rdss}]
We will show that there exists, $\cC_1, \dots, \cC_{m},$ $\cC_i \in \ff_q^n, i =1, \dots,m,$
all of which are RDSS codes of dimension $k$ such that
\begin{equation}\label{eq:cover}
\cup_{i =1}^{m} \cC_i  = \ff_q^n,
\end{equation}
where $m =  q^{n-k}\min\{n \ln q, 1+k\ln q\}$.
Assume, the above is true. Then, any $\bfy \in \ff_q^n$ must belong to 
at least one of the $C_i$s. Suppose,  $\bfy \equiv (Y_1, \dots, Y_n)\in \ff_q^n$ and $\bfy \in C_i$. Then,
the encoding function of the desired index code $\cD$ is simply given by, $f(\bfy) = i$. 
 If the recovery functions of $\cC_i$ are $f^i_j, j=1, \dots, n$, then,   the decoding functions
of $\cD$ are given by:
$$
g_j(i, \{Y_l: l \in N(j)\}) =  f^i_j(\{Y_l: l \in N(j)\}).
$$
Clearly the length of the index code
is $\log_q m = n -k + \log_q (\min\{n\ln q,1+k\ln q\})$.

\begin{figure}[t]
\begin{center}
\includegraphics[width=0.5\textwidth]{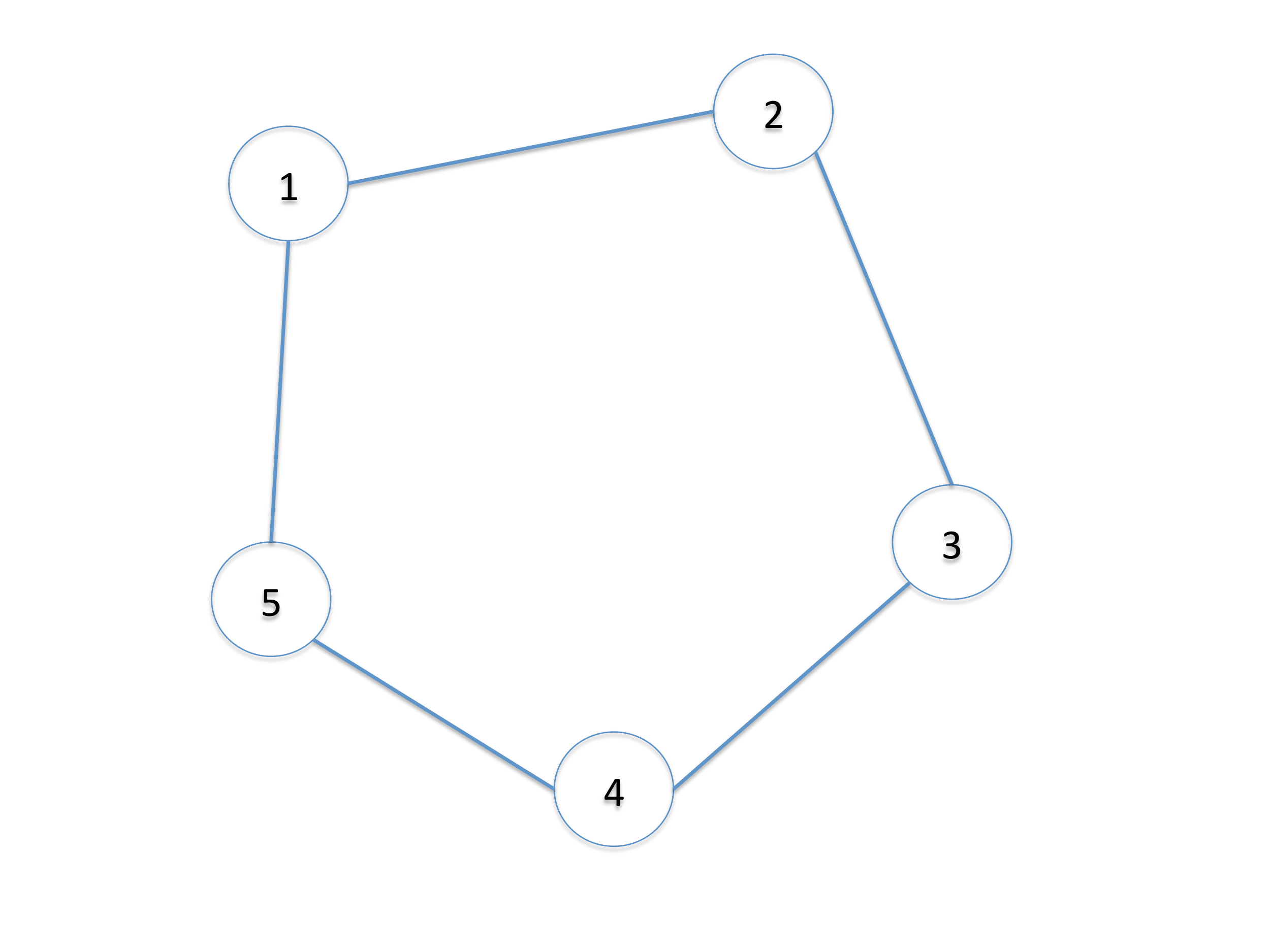}
\end{center}
\caption{A distributed storage graph (the pentagon) that shows $\rdss(G) \ne n -\indx(G)$.} 
\label{fig:pent}
\end{figure}

It remains to show the existence of RDSS codes $\cC_1, \dots, \cC_{m}$ of dimension $k$ each
with property \eqref{eq:cover}. We will show that, there exists $m$ vectors $\bfx_j, j =1, \dots, m$ such that
\begin{equation}
\cC_i = \cC + \bfx_i \equiv \{\bfy+\bfx_i : \bfy \in \cC\}.
\end{equation}
From Lemma \ref{lem:trans}, $\cC_i, i =1, \dots, m$ are all RDSS codes of dimension $k$.
Suppose, $\bfx_i, i =1, \dots, m$ are randomly and independently chosen from $\ff_q^n$.
Now,
$$
\Pr(\cup_{i =1}^{m'} \cC_i  \ne \ff_q^n) \le q^n (1-|\cC|/q^n)^{m'} < q^n e^{-m'|\cC|/q^n} \le 1,
$$
when we set $m' = q^{n-k} n \ln q \le m$ in the above expression (see \cite[Prop.~3.12]{babai1995automorphism}).

If, instead we set $m' = q^{n-k} k \ln q$ then, $\Pr(\cup_{i =1}^{m'} \cC_i  \ne \ff_q^n) \le q^{n-k}$,
which is also the expected number of points, that do not belong to any of the $m'$ translations.
To cover all these remaining points we  need at most $q^{n-k}$ other transmission. Hence,
there must exists a covering such that $q^{n-k} k \ln q + q^{n-k} = q^{n-k}(k\ln q +1)\le m$ 
translations suffice.
\end{proof}

The proof of Lemma \ref{lem:rdss} can also be given via a greedy algorithm. In the greedy algorithm 
about $\log m$ vectors are recursively chosen instead of $m$ random vectors. We provide the construction/proof next.

\subsection{A greedy algorithm for the proof of Lemma \ref{lem:rdss}}
Note that, to proof Lemma \ref{lem:rdss} we need to show
 the existence of a {\em covering} of the entire $\ff_q^n$, by translations of
an RDSS code. What we show here is that the translations themselves form a
linear subspace.
The greedy covering argument that we employ below was used to show the existence of good linear covering codes
in \cite{delsarte1986most} (see, also, \cite{goblick1963coding,cohen1983nonconstructive}), and was reintroduced 
in \cite{mazumdar2010linear} to show the existence of {\em balancing sets}.


\begin{lemma}[Bassalygo-Elias]\label{lem:elias}
Suppose, $\cC,\cB \subseteq  \ff_q^n.$ Then,
\begin{equation}
\sum_{\bfx \in \ff_q^n} \mid(\cC+\bfx) \cap \cB\mid = |\cC| |\cB|.
\end{equation}
\end{lemma}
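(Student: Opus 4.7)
The plan is to prove the identity by a straightforward double-counting argument. I would interpret both sides as counting the same set of triples: namely, triples $(\bfx, \bfc, \bfb) \in \ff_q^n \times \cC \times \cB$ satisfying $\bfb = \bfc + \bfx$.

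First I would rewrite the left-hand side using indicator functions:
\[
\sum_{\bfx \in \ff_q^n} |(\cC+\bfx) \cap \cB| \;=\; \sum_{\bfx \in \ff_q^n} \sum_{\bfb \in \cB} \mathbf{1}\bigl[\bfb \in \cC + \bfx\bigr].
\]
Then I would swap the order of summation and use the fact that $\bfb \in \cC + \bfx$ is equivalent to $\bfb - \bfx \in \cC$:
\[
= \sum_{\bfb \in \cB} \sum_{\bfx \in \ff_q^n} \mathbf{1}\bigl[\bfb - \bfx \in \cC\bigr].
\]
For each fixed $\bfb$, the map $\bfx \mapsto \bfb - \bfx$ is a bijection of $\ff_q^n$ onto itself, so the inner sum counts exactly the preimages of $\cC$, which equals $|\cC|$. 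Therefore the whole expression equals $|\cB|\cdot|\cC|$.

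There is no real obstacle here — the identity is a group-theoretic version of a ``Fubini'' argument and only requires care in writing down the bijection $\bfx \mapsto \bfb - \bfx$. The main thing to verify is that nothing uses linearity of $\cC$ or $\cB$: the proof works for arbitrary subsets of $\ff_q^n$, which is exactly what the later greedy covering argument needs (since the lemma will be applied with $\cB$ the complement of a partial covering, not a subspace).
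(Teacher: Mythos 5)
Your proof is correct and follows essentially the same double-counting argument as the paper: the paper counts pairs $(\bfx,\bfy)$ with $\bfy\in\cB$ and $\bfy\in\cC+\bfx$ directly, whereas you phrase it via indicator functions and Fubini, but the key step in both is that for each fixed $\bfb\in\cB$ the set of admissible $\bfx$ is a translate of $-\cC$ and hence has size $|\cC|$. Your closing remark that no linearity of $\cC$ or $\cB$ is used is accurate and matches how the lemma is applied in the covering argument.
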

\begin{proof}
\begin{align*}
\sum_{\bfx \in \ff_q^n} \mid(\cC+\bfx) \cap \cB\mid & = |\{(\bfx,\bfy): \bfx \in \ff_q^n, \bfy \in \cB, \bfy \in \cC+\bfx\} | \\
& = |\{(\bfx,\bfy): \bfx \in \ff_q^n, \bfy \in \cB, \bfx \in \bfy - \cC\} |\\
& = |\{(\bfx,\bfy):  \bfy \in \cB, \bfx \in \bfy - \cC\} |\\
& = |\cB| |\bfy - \cC|
 = |\cC| |\cB|,
\end{align*}
where $\bfy - \cC \equiv \{\bfy-\bfa: \bfa \in \cC\}$.
\end{proof}
For any set $\cF \subseteq \ff_q^n$, define 
\begin{equation}\label{eq:q}
Q(\cF) \equiv 1 - \frac{|\cF|}{q^n}.
\end{equation}
In words, $Q(\cF)$ denote the proportion of $\ff_q^n$ that is not covered by $\cF$.
The following property is a result of Lemma \ref{lem:elias}.
\begin{lemma}\label{lem:q}
For every subset $\cF \subseteq \ff_q^n$, 
\begin{equation}
q^{-n}\sum_{\bfx\in \ff_q^n} Q(\cF\cup (\cF+\bfx)) = Q(\cF)^2.
\end{equation}
\end{lemma}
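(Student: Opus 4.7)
The plan is to reduce the claim to a direct computation by unfolding the definition of $Q$, expressing the union $\cF \cup (\cF + \bfx)$ via inclusion-exclusion, and then invoking Lemma \ref{lem:elias} to evaluate the sum of intersection sizes.

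First I would write $|\cF \cup (\cF+\bfx)| = 2|\cF| - |\cF \cap (\cF+\bfx)|$, so that by the definition \eqref{eq:q},
\begin{equation*}
Q(\cF\cup(\cF+\bfx)) = 1 - \frac{2|\cF|}{q^n} + \frac{|\cF\cap(\cF+\bfx)|}{q^n}.
\end{equation*}
Summing over $\bfx \in \ff_q^n$, the constant and linear-in-$|\cF|$ terms contribute $q^n - 2|\cF|$, while the remaining sum is $q^{-n}\sum_{\bfx}|\cF \cap (\cF+\bfx)|$.

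Next I would apply Lemma \ref{lem:elias} with $\cC = \cB = \cF$ to conclude that $\sum_{\bfx}|\cF \cap (\cF+\bfx)| = |\cF|^2$. Substituting this back and dividing the total sum by $q^n$ gives
\begin{equation*}
q^{-n}\sum_{\bfx \in \ff_q^n} Q(\cF\cup(\cF+\bfx)) = 1 - \frac{2|\cF|}{q^n} + \frac{|\cF|^2}{q^{2n}},
\end{equation*}
which is exactly $(1 - |\cF|/q^n)^2 = Q(\cF)^2$.

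There is essentially no obstacle here; the only point that requires a moment of care is the correct orientation in the Bassalygo--Elias identity, namely that taking $\cC = \cB = \cF$ produces $\sum_{\bfx}|(\cF+\bfx)\cap \cF| = |\cF|^2$ rather than $q^n|\cF|$ or some other misreading. Once that application is made cleanly, the identity is purely arithmetic. This lemma will later be iterated so that a translation $\bfx$ achieving at most the average, $Q(\cF)^2$, can be chosen at each step of the greedy covering construction for Lemma \ref{lem:rdss}.
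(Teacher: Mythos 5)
Your proof is correct and follows exactly the same route as the paper: inclusion--exclusion for $|\cF\cup(\cF+\bfx)|$, linearity of the sum, and Lemma \ref{lem:elias} with $\cC=\cB=\cF$ to evaluate $\sum_{\bfx}|\cF\cap(\cF+\bfx)|=|\cF|^2$. There is nothing to add.
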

\begin{proof}
We have,
$$
|\cF\cup (\cF+\bfx)| = 2|\cF| - |\cF\cap (\cF+\bfx)|.
$$
Therefore,
$$
 Q(\cF\cup (\cF+\bfx)) = 1 - 2|\cF| q^{-n} +  |\cF\cap (\cF+\bfx)|q^{-n},
$$
and hence,
\begin{align*}
q^{-n}\sum_{\bfx\in \ff_q^n} Q(\cF\cup (\cF+\bfx)) &= 1 - 2|\cF| q^{-n} \\ &\quad+ q^{-2n}\sum_{\bfx\in \ff_q^n}|\cF\cap (\cF+\bfx)|\\
&= 1 - 2|\cF| q^{-n} + q^{-2n}|\cF|^2\\
& = (1 - |\cF| q^{-n})^2,
\end{align*}
where in the second line we have used Lemma \ref{lem:elias}.
\end{proof}

The implication of the above lemma is the following result.
\begin{lemma}\label{lem:cover}
For every subset 
$\cF \subseteq \ff_q^n$, there exists $m = q^n |\cF|^{-1}\min\{n \ln q, 1+\ln |\cF| \}$ vectors
$\bfx_0 =0, \bfx_1, \bfx_2, \dots, \bfx_{m-1} \in \ff_q^n$, such that
$$
\cup_{i=0}^{m-1} (\cF+\bfx_i) = \ff_q^n.
$$
\end{lemma}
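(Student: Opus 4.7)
The plan is a greedy covering argument in the spirit of Delsarte--Piret and Bassalygo. Start with $\bfx_0 = 0$ so that $\cF_0 \equiv \cF + \bfx_0 = \cF$, and inductively pick $\bfx_i$ so that $\cF_i \equiv \cF_{i-1} \cup (\cF + \bfx_i)$ is as large as possible. The core observation, which generalizes Lemma~\ref{lem:q}, is that for any already-covered set $\cF_{i-1} \subseteq \ff_q^n$,
\begin{equation*}
q^{-n}\sum_{\bfx\in \ff_q^n} Q\bigl(\cF_{i-1}\cup (\cF+\bfx)\bigr) = Q(\cF_{i-1})\, Q(\cF).
\end{equation*}
This follows from expanding $|\cF_{i-1}\cup(\cF+\bfx)| = |\cF_{i-1}|+|\cF|-|\cF_{i-1}\cap(\cF+\bfx)|$, summing over $\bfx$, and applying Lemma~\ref{lem:elias} to the intersection term. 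Hence some translate $\bfx_i$ achieves $Q(\cF_i) \le Q(\cF_{i-1})Q(\cF)$, and by induction
\begin{equation*}
Q(\cF_{m-1}) \le Q(\cF)^m \le \bigl(1 - |\cF|/q^n\bigr)^m \le \exp(-m|\cF|/q^n).
\end{equation*}

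For the first branch of the $\min$, I would set $m = q^n|\cF|^{-1} n \ln q$; then $Q(\cF_{m-1}) < q^{-n}$, meaning the uncovered set has size strictly less than $1$, hence is empty, so $m$ translates of $\cF$ cover $\ff_q^n$. For the second branch, I would run the greedy phase only for $m' = q^n|\cF|^{-1}\ln|\cF|$ steps; then $Q(\cF_{m'-1}) \le e^{-\ln|\cF|} = 1/|\cF|$, so at most $q^n/|\cF|$ points remain uncovered. Each such leftover point $\bfy$ can be captured by a single additional translate: pick any $\bfa \in \cF$ and note $\bfy \in \cF + (\bfy - \bfa)$. Using one fresh translate per leftover point yields a total of at most
\begin{equation*}
q^n|\cF|^{-1}\ln|\cF| + q^n|\cF|^{-1} = q^n|\cF|^{-1}(1+\ln|\cF|)
\end{equation*}
translates. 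Taking the better of the two bounds gives exactly the claimed $m = q^n|\cF|^{-1}\min\{n\ln q,\, 1 + \ln|\cF|\}$.

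The only nontrivial step is the averaging identity above; everything else is routine bookkeeping with $Q(\cdot)$ and the inequality $1-x \le e^{-x}$. Since the identity is a one-line consequence of Lemma~\ref{lem:elias} applied to $(\cF_{i-1}, \cF)$ in place of $(\cC, \cB)$, I do not anticipate a real obstacle; the argument is essentially a two-regime version of the standard greedy-covering bound, where the first regime ($n\ln q$) is tight when $|\cF|$ is very small and the second regime ($1+\ln|\cF|$) is tight when $|\cF|$ is moderate or large.
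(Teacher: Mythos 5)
Your proof is correct and rests on the same cornerstone as the paper's — the Bassalygo--Elias identity (Lemma~\ref{lem:elias}) used to show that on average a fresh translate of $\cF$ covers a $|\cF|/q^n$ fraction of whatever remains — and the two-regime bookkeeping ($n\ln q$ versus $1+\ln|\cF|$, finishing the tail one point at a time) matches the paper's. The one genuine structural difference is in the greedy step itself: you grow the cover one translate of $\cF$ at a time, $\cF_i = \cF_{i-1}\cup(\cF+\bfx_i)$, using the generalized identity $q^{-n}\sum_{\bfx}Q(\cF_{i-1}\cup(\cF+\bfx)) = Q(\cF_{i-1})Q(\cF)$, whereas the paper applies Lemma~\ref{lem:q} verbatim and \emph{doubles} at each step, $\cF_i = \cF_{i-1}\cup(\cF_{i-1}+\bfz_{i-1})$, so that $Q(\cF_t)\le Q(\cF)^{2^t}$ after only $t\approx\log_2 m$ greedy choices. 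Both yield the identical count of $m$ translates and the identical bound, but the paper's doubling is not just an aesthetic choice: because the final set of translation vectors is then $\{\sum_{i\in S}\bfz_i : S\subseteq\{0,\dots,t-1\}\}$, it is generated by only $\log_2 m$ vectors, which is exactly what the subsequent Corollary and the remark on polynomial-time index-code decoding exploit. Your version forgoes that additive structure and so would require a size-$m$ lookup table, but as a proof of Lemma~\ref{lem:cover} as stated it is complete and correct.
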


\begin{proof}
From Lemma \ref{lem:q}, for every subset 
$\cF \subseteq \ff_q^n$, there exists $\bfx \in \ff_q^n$ such that
$$
Q(\cF\cup (\cF+\bfx)) \le  Q(\cF)^2.
$$
For the set $\cF\equiv \cF_0$, recursively define, for i =1, 2,\dots
$$
\cF_i = \cF_{i-1} \cup (\cF_{i-1} + \bfz_{i-1}),
$$
where $\bfz_i\in \ff_q^n$ is such that, 
$$
Q(\cF_i\cup (\cF_i+\bfz_i)) \le  Q(\cF_i)^2, \quad i =0,1,\dots
$$
Clearly,
$$
Q(\cF_t) \le Q(\cF_0)^{2^t} = \Big(1- q^{-n} |\cF|\Big)^{2^t} \le e^{- q^{-n} |\cF|2^t}.
$$
At this point we can just use the argument at the end of proof of Lemma \ref{lem:rdss}, with $2^t$ plating the role of $m'$.

%

\begin{figure}[t]
\begin{center}
\includegraphics[width=0.5\textwidth]{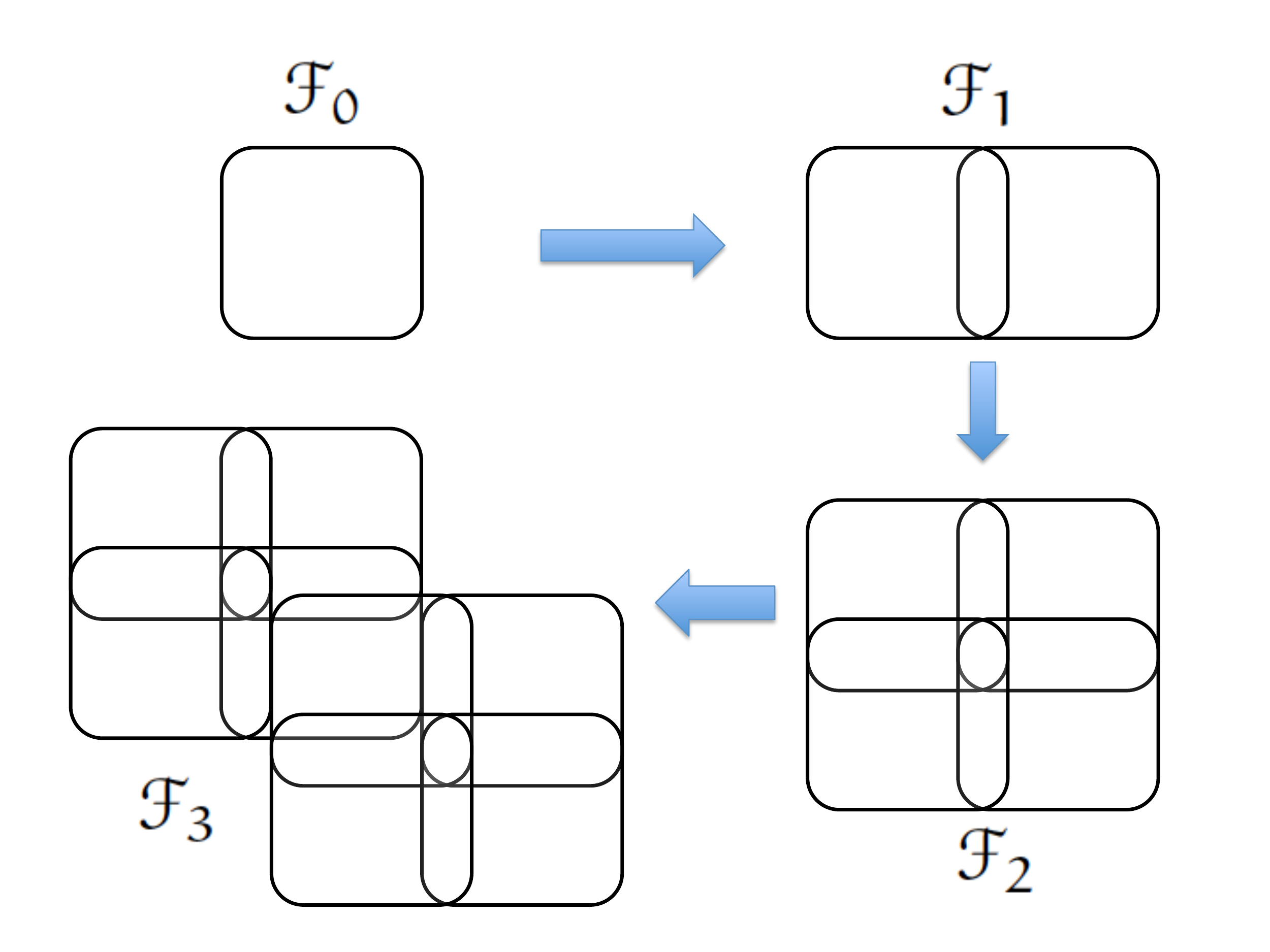}
\end{center}
\caption{The recursive construction of the sets $\cF_1, \cF_2, \cF_3$ of Lemma \ref{lem:cover}.} 
\label{fig:replicate}
\end{figure}

On the other hand $\cF_t$ contains $\cF_0$ and its $2^t -1$ translations (see, Figure \ref{fig:replicate} for 
an illustration). 
Hence,  there exists $m = \min\Big\{\frac{q^nn \ln q}{|\cF|}, \frac{q^n (1+\ln |\cF|)}{|\cF|}\Big\}$ vectors
$\bfx_0 =0, \bfx_1, \bfx_2, \dots, \bfx_{m-1} \in \ff_q^n$, such that
$$
\cup_{i=0}^{m-1} (\cF+\bfx_i) = \ff_q^n.
$$

\end{proof}

To complete the proof of Lemma  \ref{lem:rdss}, as before,  we just
 show the existence of RDSS codes $\cC_0 \equiv \cC, \cC_1, \dots, \cC_{m-1}$ of dimension $k$ each
with property \eqref{eq:cover}. This is achieved by choosing $m-1$ vectors $\bfx_j, j =1, \dots, m-1$ such that
\begin{equation}
\cC_i = \cC + \bfx_i \equiv \{\bfy+\bfx_i : \bfy \in \cC\}.
\end{equation}
From Lemma \ref{lem:trans}, $\cC_i, i =1, \dots, m-1$ are all RDSS codes of dimension $k$. Moreover, from 
Lemma \ref{lem:cover}, we already know the existence of $\bfx_j, j =1, \dots, m-1$ such that property \eqref{eq:cover}
is satisfied. However, from Lemma  \ref{lem:cover} it is also clear that these $m$ vectors form a linear subspace and
can be generated by only $\log_q m$ vectors.

\begin{corollary}
For every subset 
$\cF \subseteq \ff_q^n$, there exists a linear subspace $\cD \in \ff_q^n$ such that $|\cD| = q^n |\cF|^{-1}n \ln q$ and
$$
\cup_{\bfx \in \cD} (\cF+\bfx) = \ff_q^n.
$$
\end{corollary}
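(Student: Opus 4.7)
The plan is to read off the linear subspace $\cD$ directly from the greedy recursion inside the proof of Lemma~\ref{lem:cover}, bypassing any new probabilistic step. First, I apply Lemma~\ref{lem:cover} to obtain vectors $\bfx_0 = 0, \bfx_1, \ldots, \bfx_{m-1}$, with $m = q^n |\cF|^{-1} n \ln q$, whose translates cover $\ff_q^n$. The decisive observation is structural: the doubling recursion $\cF_i = \cF_{i-1} \cup (\cF_{i-1} + \bfz_{i-1})$ in that proof produces each $\bfx_i$ as a subset sum of the $t \le \lceil \log_2 m \rceil$ direction vectors $\bfz_0, \ldots, \bfz_{t-1}$ that were chosen along the way.

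Next, I set $\cD := \mathrm{span}_{\ff_q}(\bfz_0, \ldots, \bfz_{t-1})$. Since every $\bfx_i$ is a $\{0,1\}$-linear combination of the $\bfz_j$'s and $\{0,1\} \subseteq \ff_q$, we have $\{\bfx_0, \ldots, \bfx_{m-1}\} \subseteq \cD$. Consequently $\bigcup_{\bfv \in \cD} (\cF + \bfv) \supseteq \bigcup_i (\cF + \bfx_i) = \ff_q^n$, so $\cD$ is a linear subspace whose translates of $\cF$ exhaust the whole ambient space. A dimension count gives $|\cD| \le q^t$.

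The main obstacle is reconciling this size bound with the corollary's stated $|\cD| = m$ for all $q$. Over $\ff_2$ the subset sums already form the full $\ff_q$-span, so $|\cD| \le 2^t \approx m$ directly. For $q > 2$ the naive bound $|\cD| \le q^{\log_2 m} = m^{\log_2 q}$ is weaker than the claim, and to close the gap one would sharpen the averaging step: instead of Lemma~\ref{lem:q}'s squared decay, establish $\avg_\bfz Q(\cF + \ff_q \bfz) \le Q(\cF)^q$ (plausible by pairwise-independence-type reasoning on the line $\{\bfy + \alpha \bfz : \alpha \in \ff_q\}$, and consistent with small examples), and then repackage the greedy iteration as the $q$-fold expansion $\cF_i = \bigcup_{\alpha \in \ff_q}(\cF_{i-1} + \alpha \bfz_{i-1})$. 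With such a $q$-th power decay, only $t = \log_q m$ iterations are needed to drive $Q(\cF_t)$ below $q^{-n}$, and the resulting $\cD$ has $|\cD| \le q^{\log_q m} = m$ as claimed.
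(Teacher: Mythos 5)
Your structural reading of Lemma~\ref{lem:cover} is exactly right, and you have put your finger on a real soft spot: the greedy recursion $\cF_i = \cF_{i-1}\cup(\cF_{i-1}+\bfz_{i-1})$ produces translate offsets that are the $\{0,1\}$-subset-sums of $\bfz_0,\dots,\bfz_{t-1}$ with $t\approx\log_2 m$, and these $2^t$ offsets only form a subspace of $\ff_q^n$ when $q=2$. The paper's own passage (``these $m$ vectors form a linear subspace and can be generated by only $\log_q m$ vectors'') is quietly conflating $\log_2 m$ with $\log_q m$; for $q>2$ the honest conclusion from the greedy is $\cD=\mathrm{span}_{\ff_q}(\bfz_0,\dots,\bfz_{t-1})$ with $|\cD|\le q^{\lceil\log_2 m\rceil}\approx m^{\log_2 q}$, which is the paper's intended route and matches the corollary only at $q=2$. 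So your diagnosis of the gap is correct.

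Your proposed repair, however, does not work. The claimed bound $\avg_{\bfz} Q\bigl(\bigcup_{\alpha\in\ff_q}(\cF+\alpha\bfz)\bigr)\le Q(\cF)^q$ is false in general, and ``pairwise-independence'' heuristics should have been a warning sign: pairwise independence buys you a second moment, i.e.\ $Q^2$, never a $q$-th power. Concretely, take the uncovered set $\bar\cF:=\ff_q^n\setminus\cF$ to be a linear subspace $H$ of dimension $d<n$, so $Q=q^{d-n}$. A point $\bfy$ remains uncovered by $\bigcup_\alpha(\cF+\alpha\bfz)$ iff the whole line $\{\bfy+\alpha\bfz:\alpha\in\ff_q\}$ lies in $H$. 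If $\bfz\in H$, this happens iff $\bfy\in H$, with probability $Q$; if $\bfz\notin H$, it never happens. Averaging over $\bfz$ gives exactly $Q\cdot Q + (1-Q)\cdot 0 = Q^2$, which exceeds $Q^q$ for every $q>2$ and $Q<1$. So the $q$-fold greedy cannot yield a $\log_q m$-generator subspace by this route, and the corollary's equality $|\cD|=q^n|\cF|^{-1}n\ln q$ should really be read as holding up to the exponent $\log_2 q$ (or simply at $q=2$). You were right to flag the step as unproven; the next move would have been to test it on a small structured $\bar\cF$ rather than trust the heuristic.
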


The above result is helpful in the decoding process of the index code. If $\cC$ is an RDSS code and $\cD$ is the 
linear subspace such that $\cup_{\bfx \in \cD} (\cC+\bfx) = \ff_q^n$, then the decoding of 
the obtained index code can be performed from $\bfx \in \ff_q^{\log_q |\cD|}$ by first multiplying $\bfx$ with 
the generator matrix of $\cD$ and then shifting $\cC$ by it. Hence, if there is a polynomial time decoding algorithm for $\cC$
then there will be one for the index code. It would not be so for the case of random-choice, where we must maintain a look-up table
of size exponential in $n$.

\begin{remark}
It is a perhaps not so surprising that the method of \cite{alon2008broadcasting}, that is the random choice,
 (or, in fact the method of \cite{lovasz1975ratio})
gives the  exact same result as the greedy algorithm method.
 \end{remark}

\section{Extension to vector codes and the capacity of linear codes}
Literatures of distributed storage often considers {\em vector linear codes}
and the same is true for  \cite{shanmugam2014}. However in the context of general nonlinear
codes, vector codes do not bring any further technical novelty and can just be thought of as
codes over a larger alphabet. 

For vector index codes, as earlier, a {\em side information} graph
$G(V,E)$ is given. Each vertex $v \in V$ represents a receiver that is interested in knowing a uniform random vector $Y_v \in \ff_q^p$.
 The receiver at $v$ knows the values of the  variables $Y_u, u \in N(v)$.
A vector  {\em index
code} $\cC$ for $\ff_q^{np}$ with side information graph $G(V,E), V = \{1,2,\dots,n\},$ is a set of
codewords in $\ff_q^{\ell p}$ ($\ell$ is the length of the code) together with:
\begin{enumerate}
\item An encoding function $f$ mapping inputs in $\ff_q^{np}$
to codewords, and
\item A set of deterministic decoding functions $g_1,\dots,g_n$ such
that $g_i\Big(f(Y_1,\dots,Y_n), \{Y_j: j \in N(i)\}\Big) = Y_i$ for every $i=1, \dots,n$.
\end{enumerate} 
Given a graph $G$ the minimum possible value of $\ell$ is denoted by $\indx^p_q(G)$ (also called the {\em broadcast capacity}). When the function $f$, $g_i$  are
 linear, for all $1\le i\le n$,
in all of their arguments in $\ff_q$, then the code is called {\em vector linear}. 

Similar generalization is possible for the definition of RDSS codes.
A  vector RDSS
code $\cC \subseteq  (\ff_q^p)^n$ with storage recovery graph $G(V,E), V = \{1,2,\dots,n\},$ is a set of
vectors in $\ff_q^{np}$ together with:
A set of deterministic recovery  functions, $f_i:\ff_q^{|N(i)|p}\to \ff_q^p$ for $i = 1,\dots,n$ such that
for any codeword $(X_1, X_2,\dots,X_n), X_i \in \ff_q^p,$
\begin{equation}
X_i = f_i(\{X_j: j \in N(i)\}), \quad i = 1,\dots,n.
\end{equation}
The normalized log-size of the code, $\frac{1}{p}\log_q |\cC|$, is called  the dimension  of $\cC$. 
Given a graph $G$ the maximum possible dimension of a vector RDSS code is denoted by $\rdss_q^p(G)$.
When the decoding functions $f_i$, $1\le i\le n$ are linear in all their  arguments (in $\ff_q$), the code is called 
{\em vector linear}. 

General (nonlinear) vector index or RDSS codes can also be thought as  scalar codes over the alphabet of size $q^p$.
Hence,
\begin{align*}
n - \rdss_q^p(G) \le &\indx_q^p(G) \\ &\le n - \rdss_q^p(G)  + \frac{\log_q (pn\ln q)}{p}.
\end{align*} 
 As a consequence, even for a constant $q$, if $p = \Omega(\log n)$, we have
 $
 \indx_q^p(G)
 $
 and $n -\rdss_q^p(G)$  differ at most by $1$
 for any graph $G$ -- and for larger $p$, this difference goes to zero.

Although, general vector codes do not lead to a different analysis, we next show that
vector linear codes can achieve a dimension  sufficiently close to $\rdss_q^p(G)$ for any graph $G(V,E).$
This should be put into contrast with results, such as \cite[Thm.~1.2]{blasiak2011lexicographic}, which show that 
a rather large gap must exist between vector linear and nonlinear index coding (or network coding) rates.
\begin{proposition}
There exists a polynomial time (in $n$) constructible vector linear RDSS code with dimension
at least $$\frac{\rdss_q^p(G)}{\beta\log n \cdot \log\log n}$$ for a large enough integer $p$ and a constant $\beta< 5$.
\end{proposition}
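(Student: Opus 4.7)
The plan is to reduce the construction of the vector linear RDSS code to the construction of a short vector linear index code via the duality already developed in the paper. By the vector analogue of Lemma~\ref{lem:equiv}, any vector linear index code of length $\ell$ for the side-information graph $G$ yields a vector linear RDSS code of dimension $n-\ell$. Hence it suffices to exhibit, in polynomial time in $n$, a vector linear index code for $G$ of length at most $n - \rdss_q^p(G)/(\beta\log n\log\log n)$.

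To build such an index code I would use a clique-cover construction: for every clique $K$ in $G$, broadcasting a single $\ff_q$-sum of the variables at $K$ enables every $v\in K$ to recover its own value from the vertices it sees in its neighborhood. A cover of $G$ by $\chi(\bar G)$ cliques therefore gives a scalar linear index code of length $\chi(\bar G)$, and for vector linear codes with sufficiently large block length $p$ the integer clique cover can be replaced, via a standard $p$-blowup, by a fractional clique cover of weight $\chi_f(\bar G)$. It therefore suffices to find a fractional clique cover of $G$ of weight at most $n - \rdss_q^p(G)/(\beta\log n\log\log n)$ in polynomial time. For this I would apply a polynomial-time $O(\log n\log\log n)$-approximation of the chromatic number (Halld\'orsson's algorithm or a refinement) to $\bar G$, and relate its output to $\rdss_q^p(G)$ through the vertex-transitivity of the vector confusion graph as in \cite{alon2008broadcasting,lovasz1975ratio}: the fractional chromatic number of the confusion graph equals $q^{p(n-\rdss_q^p(G))}$, and Lov\'asz's bound $\chi \le \chi_f(1+\ln|V|)$ loses only a factor $(1+np\ln q)$, which is absorbed into the $\log n\log\log n$ budget for $p$ of appropriate magnitude.

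The main obstacle is the polynomial-time approximation step: the scalar linear versus nonlinear index-coding gap is known to be polynomially large in the worst case \cite{blasiak2011lexicographic}, so the logarithmic factor in the vector setting must come from the block-length blow-up combined with the fractional relaxation of the clique cover LP, rather than from any scalar linear argument alone. The constant $\beta < 5$ would be obtained by tracking the multiplicative constants in the chromatic-number approximation together with the $\ln q + 1$ overhead from Lemma~\ref{lem:cover}; polynomial-time constructibility then follows from the constructive nature of both the approximation algorithm on $\bar G$ and the greedy coset covering of Section~\ref{sec:dual}.
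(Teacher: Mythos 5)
Your high-level reduction is right: construct a short vector linear index code in polynomial time, then dualize to get the RDSS code. But the heart of the proposition is \emph{which} polynomial-time index-coding construction you invoke, and there the argument breaks. You propose to run a chromatic-number approximation on $\bar G$ to obtain a (fractional) clique cover, and then justify the resulting length bound by appealing to the fractional chromatic number of the \emph{confusion graph}. These are two unrelated objects. The clique-cover number $\chi_f(\bar G)$ is only an upper bound on $\indx_q^p(G)$ and can be polynomially larger than it; there is no reason a fractional clique cover of weight $n-\rdss_q^p(G)/(\beta\log n\log\log n)$ should even exist, and the vertex-transitivity identity $\chi_f(\mathrm{conf}) = q^{p(n-\rdss_q^p(G))}$ lives on a graph with $q^{pn}$ vertices, so it cannot be reached from any polynomial-time algorithm on the $n$-vertex graph $\bar G$. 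Moreover, the best known polynomial-time approximation ratio for chromatic number is $O(n/\mathrm{polylog}\, n)$, not $O(\log n\log\log n)$; Halld\'orsson's algorithm does not give the ratio you assume.

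The paper's proof avoids all of this by using the result of \cite{chaudhry2011complementary}: there is a polynomial-time construction of a vector linear index code of length $\ell$ achieving $n-\ell \ge (n-\indx_q^p(G))/(\alpha\log n\log\log n)$. Crucially, this approximates the \emph{complementary} quantity $n-\indx$ rather than $\indx$ or $\chi(\bar G)$; the logarithmic ratio comes from Seymour's LP-rounding result for packing directed circuits \cite{seymour1995packing}, not from graph coloring. Combined with the vector-linear duality of \cite{shanmugam2014} (which, unlike the proof of Lemma~\ref{lem:equiv}, is explicitly a linear-algebraic dual and thus preserves linearity without needing to pick a particular level set) and the bound $n-\indx_q^p(G) \ge \rdss_q^p(G) - \frac{1}{p}\log_q(pn\ln q)$, one gets the claimed dimension for $p$ large enough. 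Your proposal would need to replace the clique-cover and coloring step with an approximation algorithm for the complementary index-coding objective; as written it cannot reach the stated bound.
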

\begin{proof}
In \cite{shanmugam2014}, it was shown that the linear algebraic dual of a vector
linear index code is a vector linear RDSS code (see, Section \ref{sec:rdss} of this paper for 
scalar codes). This implies that, for a vector linear index code of length $\ell$, the dual code is
a vector linear RDSS code of dimension $n-\ell$. In \cite{chaudhry2011complementary},
a vector linear index code of length $\ell$ was constructed in polynomial time, such that
$
n - \ell \ge \frac{n -\indx_q^p(G)}{\alpha\log n \cdot \log\log n},
$ 
(this result of \cite{chaudhry2011complementary} was also used in \cite{shanmugam2014}), $\alpha$ is a constant (see, \cite{seymour1995packing}, the building-block  of \cite{chaudhry2011complementary},
for the value of the constant). The dual code of
this code must be a vector RDSS code of dimension $k = n-\ell$. From the above discussion, it is evident that,
$$
n -\indx_q^p(G) \ge \rdss_q^p(G)  - \frac{\log_q (pn\ln q)}{p}.
$$
Hence,
$$
k \ge \frac{\rdss_q^p(G)  - \frac{\log_q (pn\ln q)}{p}}{\alpha \log n \cdot \log\log n}.
$$
Hence if $p$ is large enough, then the statement of the theorem is proved.
\end{proof}

\begin{remark}
How large does $p$ needs to be for the above proposition to hold? It is clear that 
$p = \Omega(\log n)$ is enough to diminish the additive error term of $ \frac{\log_q (pn\ln q)}{p}$.
However, for the algorithm of \cite{chaudhry2011complementary} to work, $p$ needs to be as large
as the denominator of a linear programming solution (see, \cite{nutov2004packing}) that is used crucially in 
\cite{chaudhry2011complementary}.
 Hence $p$, depending on the number of cycles in the graph,  may
required to be exponential in $n$. 
\end{remark}

\vspace{0.1in}

\emph{Acknowledgement:}
The author thanks A. Agarwal, A. G. Dimakis and  K. Shanmugam for
 useful references. 
This work was supported in part by  a grant from University of Minnesota.

\bibliographystyle{abbrv}
\bibliography{/Users/arya/Documents/BIB/aryabib}
\end{document}